\title{The  continuum limit of the modular discretization of AdS$_2$}
\author[a]{Minos Axenides}
\author*[b,a]{Emmanuel Floratos} 
\author[c]{Stam Nicolis}
\affiliation[a]{Institute of Nuclear and Particle Physics, NCSR ``Demokritos''\\
Aghia Paraskevi, GR--15310, Greece}
\affiliation[b]{Physics Department, University of Athens, Zografou University Campus\\
Athens, GR-15771, Greece}
\affiliation[c]{Institut Denis Poisson, Université de Tours, Université d'Orléans, CNRS (UMR7013)\\
Parc Grandmont, 37200 Tours, France}
\emailAdd{axenides@inp.demokritos.gr}
\emailAdd{mflorato@phys.uoa.gr}
\emailAdd{stam.nicolis@lmpt.univ-tours.fr}
\abstract{According to the ’t Hooft–Susskind holography, the black hole entropy, $S_\mathrm{BH},$ is
carried by the chaotic microscopic degrees of freedom, which live in the near horizon region
and have a Hilbert space of states of finite dimension $d = \exp(S_\mathrm{BH}).$ In previous work we
have proposed that the near horizon geometry, when the microscopic degrees of freedom can
be resolved, can be described by the AdS$_2[\mathbb{Z}_N ]$ discrete, finite and random geometry, where
$N\propto S_\mathrm{BH}.$  What had remained as an open problem is how the smooth
AdS$_2$ geometry can be recovered, in the limit when $N\to\infty.$ In this contribution,  we present the salient points of the solution to 
this problem, which involves  embedding  the discrete and finite AdS$_2[\mathbb{Z}_N ]$ geometry 
in a family of finite geometries, AdS$_2^M[\mathbb{Z}_N ],$ where $M$ is another integer. This family can be
constructed by an appropriate toroidal compactification and discretization of the ambient
(2+1)-dimensional Minkowski space-time. In this construction $N$ and $M$ can be understood
as “infrared” and “ultraviolet” cutoffs respectively. This construction allows us to
obtain the continuum limit of the AdS$_2^M[\mathbb{Z}_N]$ discrete and finite geometry, by taking both $N$
and $M$ to infinity in a specific correlated way, following a reverse process: Firstly, by  recovering the continuous, toroidally compactified, AdS$_2[\mathbb{Z}_N ]$ geometry, by removing the ultraviolet cutoff; secondly,  by removing the infrared
cutoff,  in a specific decompactification limit, while keeping the radius of AdS$_2$ finite. It is
in this way that we recover the standard non-compact AdS$_2$ continuum space-time. This
method can be applied directly to higher-dimensional AdS spacetimes. 
}
\newtheorem{prop}{Proposition}
\newcommand{\newc}{\newcommand}
\newc{\be}{\begin{equation}}
\newc{\ee}{\end{equation}}
\begin{document}
\maketitle
\section{Introduction}\label{intro}
The present work, mathematically, belongs to the area of algebraic geometry over finite rings. However its relevance for physics  stems   from  the proposal of using specific,  
discrete and finite arithmetic geometries, as toy models,  in order to describe  properties of quantum gravity in general and the  structure of space-time, in particular,  at   distances of the order of the  Planck scale($10^{-33} \mathrm{cm}$), where the notions  of the  metric and  of the  continuity of spacetime break down~\cite{Axenides:2013iwa}. 

At Planck scale energies,    quantum mechanics, as we know it from lower  energy scales, implies that the notion of spacetime itself becomes ill-defined, through the appearance from the vacuum of real or virtual  black holes of Planck length size~\cite{Hawking:1979zw}.

  Probing this scale by scattering experiments of any sort of particle--like objects, black holes will be produced and the strength of the gravitational interaction will be of O(1), which leads to a breakdown of perturbative gravity and of  the usual continuum spacetime description~\cite{Carlip:2009km,Carlip:2011tt}. 

 The above remarks led some authors to consider the idea, that  one has to abandon continuity of spacetime, locality of interactions and regularity of dynamics.
  Indeed there are recent  arguments that quantization of gravity implies discretization and finiteness of space time~\cite{Hooft:2016pmw,Verlinde:2010hp}. This is, indeed,  an old idea, that was put forward, already way back, by the founders of quantum physics  and gravity.

A few years ago  the seminal paper~\cite{Almheiri:2012rt} highlighted the relevance of the so--called ``new black hole information paradox''~\cite{Papadodimas:2012aq}, 
 which finally lead to the conjectures that go under the label  ER=EPR~\cite{Maldacena:2013xja} and culminate in the  so--called QM=GR correspondence~\cite{Susskind:2017ney}.  

These conjectures relate strongly  the description of spacetime geometry and quantum gravity  to  quantum  information theoretic tools, such as entanglement of information, algorithmic complexity,random  quantum networks,quantum holography, error correcting codes. 

A discrete and finite spacetime for quantum gravity is a possible way for describing the remarkable fact that  the Hilbert space of states of the BH  microscopic degrees of freedom is finite-dimensional. Its dimensionality equals to the exponential of the Bekenstein-Christodoulou-Hawking  black hole  entropy, which is of quantum origin. The  generalization of the Bekenstein entropy bounds implies that,  for any pair of local observers in a general gravitational background, the physics inside  their causal diamond is also described by a finite dimensional  Hilbert space of states~\cite{Bousso:2018bli}. This result  has been exploited  further and consistently under the name of  holographic spacetime, in the works of refs.~\cite{Giddings:2012bm,Banks:2020dus,Bao:2017rnv}.

Our idea about the nature of spacetime at the  Planck scale, takes the  notion of a holographic spacetime  one step further:  Namely, that the finite dimensionality of the Hilbert space of local spacetime regions originates from  a  discrete and finite  spacetime, which underlies the emergent continuous geometric description~\cite{Floratos:1989au,Axenides:2013iwa}. 

Our starting point, therefore, is  the  hypothesis that  space-time, at the Planck scale, is  fundamentally  discrete and finite
and, moreover, does not emerge from any other continuous description(conformal field theory,  string theory, or anything else). We claim  that, at ``large'' distances (in units of the Planck length), the continuous spacetime geometry can be described as an  infrared limit thereof. This hypothesis, indeed, is  similar to the proposal by 't Hooft~\cite{Hooft:2016pmw}. 

This assumption implies  developing and using the appropriate mathematical tools, that can describe the properties and dynamics of discrete and finite geometries as well as the emergence, in their infrared limits,  of continuous geometries. So what we shall show in this contribution is an explicit example of how the continuous geometry of AdS$_2$ can emerge as a scaling limit of a specific discretization procedure. The crucial insight is that, in order to obtain a scaling limit, the  discretization procedure must entail the  introduction of {\em two} cutoffs, a UV cutoff and an IR cutoff, related in a particular way-a more complete presentation may be found in ref.~\cite{Axenides:2019lea}.

We do not wish to imply that it is not possible to define quantum gravity, with a finite dimensional Hilbert space, in any other way; just that this is one possible way to describe quantum physics with finite dimensional Hilbert space.

\section{Discretization and toroidal compactification of the AdS$_2$ geometry}\label{modN}

\subsection{The UV cutoff,  the lattice of integral points and the SO$(2,1,\mathbb{Z})$ isometry of AdS$_2^M[\mathbb{Z}]$ }\label{UVcutofflatt}
 We shall now present and study in detail  the lattice of  integral points of AdS$_2,$ along with its isometries. 
 
The physical lengthscale in our problem is the radius of the AdS$_2$ spacetime, $R_{\mathrm{AdS}_2}.$ We set   $R_{\mathrm{AdS}_2}=1$ and we divide it into $M$ segments, of length $a=R_{\mathrm{AdS}_2}/M.$
This defines  $a$ as the UV cutoff (lattice spacing) and $M\in\mathbb{N}$ and, hence, a  lattice in $\mathscr{M}^{2,1}.$ 

The continuum limit is defined by taking  $M\to\infty$ and $a\to 0$ with $R_{\mathrm{AdS}_2}=1$ fixed. 

The global embedding coordinates $(x_0, x_1,x_2)$ of this lattice are $(ka,la,ma)=a(k,l,m),$ where $k,l,m\in\mathbb{Z}.$ They  are measured in units of the lattice 
spacing $a.$ Therefore the lattice points, that lie on AdS$_2$ satisfy the equation 
\begin{equation}
\label{AdS2latt}
k^2+l^2-m^2=M^2
\end{equation}
whose solutions define AdS$_2^M[\mathbb{Z}],$ the set of all integral points of AdS$_2$ with integer radius $M.$ 
 
 In the literature there has been considerable effort in counting the number of solutions to the above equation, in particular the asymptotics of the density of such points~\cite{Shanks,Baragar,Kontorovich,Duke,lowryduda2017variants,oh2014limits}. 
This problem can be mapped  to a problem whose solution is known, namely the Gauss circle problem. This pertains to  finding the number $r_2(m,M)$  of solutions to the equation $k^2+l^2=M^2+m^2.$  This number is determined by factoring $M^2+m^2$ into its prime factors~\cite{Shanks} and counting the
number of primes, $p_i,$ of the form $p_i\equiv1\,\mathrm{mod}\,4$~(this is described in detail in~\cite{bressoudwagon};  the dependence on $M$ is a topic of current research~\cite{lowryduda2017variants,oh2014limits}). 

This factorization procedure generates a sequence of primes that contains an element of inherent randomness. It is this property that captures the random distribution of the integral points on AdS$_2$--this is illustrated in figs.~\ref{AdS2rat}.
\begin{figure}[thp]
\begin{center}
\subfigure{\includegraphics[scale=0.5]{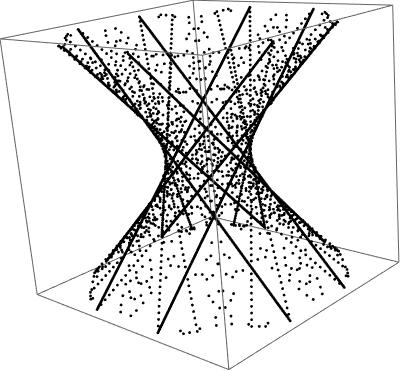}}
\subfigure{\includegraphics[scale=0.5]{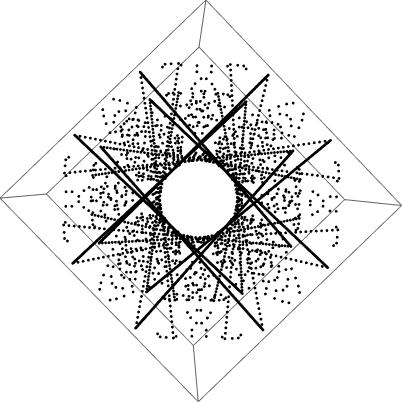}}
\end{center}
\caption[]{Integral points on AdS$_2.$}
\label{AdS2rat}
\end{figure}

Therefore, from these facts,  the number of integral points of the hyperboloid, up to height $m,$  is given by the expression
\be
\label{solhyp}
\mathrm{Sol}(m)=4+2\sum_{j=1}^m r_2(j,M)
\ee
We plot this function--in fig.~\ref{GPpoints}, for $M=1$, when $m$ runs from $-200$ to 200 (due to the  symmetry, $m\leftrightarrow -m,$ we plot only the positive values of $m.$)
\begin{figure}[thp]
\begin{center}
\includegraphics[scale=0.3]{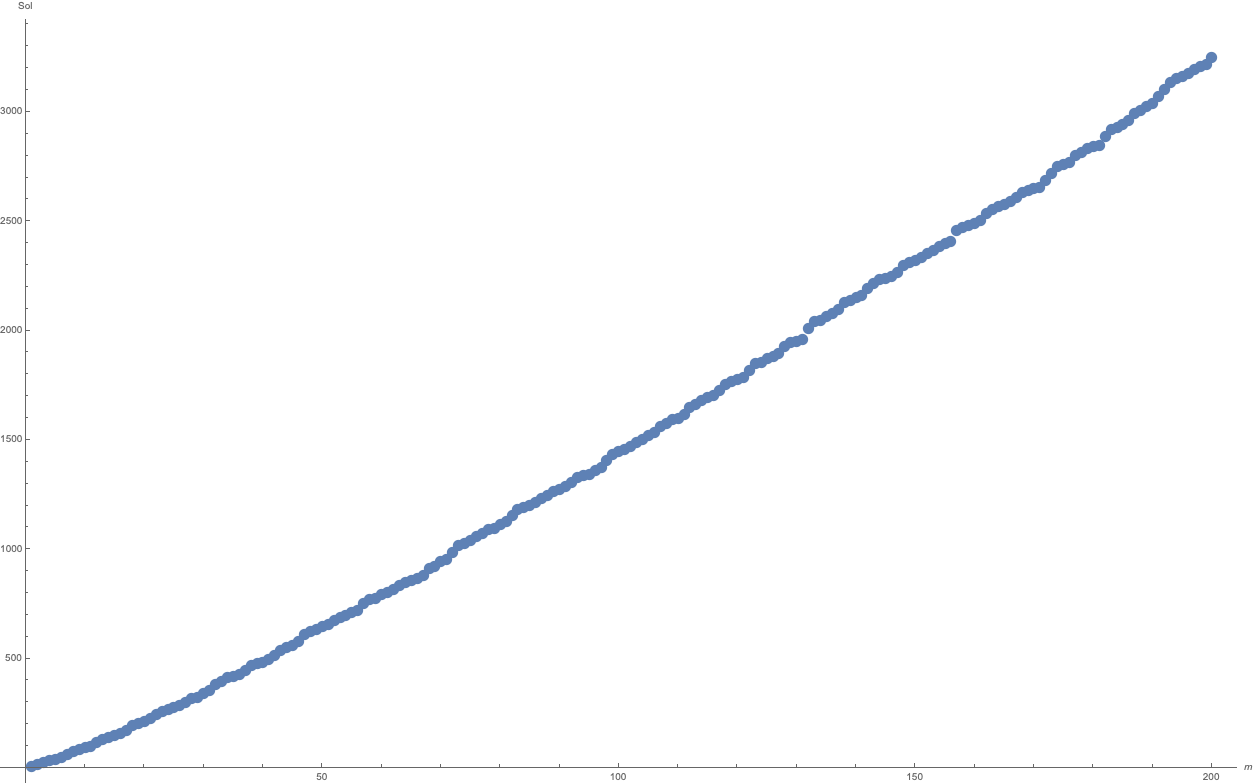}
\end{center}
\caption[]{The number of integral points, on AdS$_2,$ as a function of the height, $m,$ for $M=1$. Due to symmetry, $m\leftrightarrow -m,$ we plot only the positive values of $m.$}
\label{GPpoints}
\end{figure}

It is, indeed, striking that the result is an almost  straight line ~\cite{lowryduda2017variants,oh2014limits}.

We shall now discuss how to actually construct these points, using the property  that they belong to light--cone lines,  which emerge from the rational points of the circle on the throat of AdS$_2.$

Using the ruling property of AdS$_2,$
\be
\label{coords}
\begin{array}{l}
k = \cos\phi - \mu\sin\phi\\
l = \sin\phi  + \mu\cos\phi\\
m = \mu
\end{array}
\ee
we may repackage these as follows
\be
\label{coords1}
x_0 + \mathrm{i}x_1 = k+\mathrm{i}l=e^{\mathrm{i}\phi}(1+\mathrm{i}\mu)=e^{\mathrm{i}\phi}(1+\mathrm{i}m)\Leftrightarrow e^{\mathrm{i}\phi}=\frac{k+\mathrm{i}l}{1+\mathrm{i}m}
\ee
hence
\be
\label{rationalpointscircle}
\begin{array}{lll}
\displaystyle
\cos\phi = \frac{k+lm}{1+m^2} & \mathrm{and}& \displaystyle\sin\phi = \frac{l-mk}{1+m^2}
\end{array}
\ee
We remark that these are rational numbers--therefore they label rational points on the circle~\cite{tan}. 

The light cone lines at $(k,l,m)$ are, therefore,  parametrized by $\mu\in(-\infty,\infty)$, as
\be
\label{lcklm}
\begin{array}{l}
x_0 =  \frac{k+lm}{1+m^2} -\mu\frac{l-mk}{1+m^2}\\
x_1 = \frac{l-mk}{1+m^2} +\mu\frac{k+lm}{1+m^2}\\
x_2 = \mu
\end{array}
\ee
(When $\mu=x_2=m,$ $x_0=k$ and $x_1 = l.$)
\begin{prop}
On these  specific light-cone lines there exist infinitely many integral points,
 when $\mu,$ that labels the space--like direction $x_2,$  takes appropriate integer values.
\end{prop}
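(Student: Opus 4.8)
The plan is to read the proposition off two facts that are already in place. First, the light-cone line (\ref{lcklm}) passes through the integral point $(k,l,m)$: it is the point $\mu=x_2=m$. Second, by (\ref{rationalpointscircle}) the direction vector of that line, $(-\sin\phi,\cos\phi,1)$, has rational entries, with common denominator $1+m^2$. Now a straight line in $\mathbb{R}^3$ through a point of $\mathbb{Z}^3$ with rational direction contains infinitely many points of $\mathbb{Z}^3$: clearing denominators produces a primitive integer direction $v$ proportional to $(mk-l,\,k+lm,\,1+m^2)$, and the whole arithmetic progression $(k,l,m)+t\,v$, $t\in\mathbb{Z}$, lies on the line and in $\mathbb{Z}^3$. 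Finally, the line (\ref{lcklm}) lies entirely on the hyperboloid --- this is exactly the ruling property (\ref{coords}) --- so each of these lattice points automatically satisfies $x_0^2+x_1^2-x_2^2=1$ and is therefore a genuine point of AdS$_2[\mathbb{Z}]$. The ``appropriate integer values'' of $\mu$ are then the values of $x_2$ along this progression, which form an arithmetic progression based at $m$.

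To make the family explicit --- and to re-obtain it in the Gaussian-integer language of (\ref{coords1}), which is the natural setting here --- I would argue as follows. By (\ref{coords1}) a point of (\ref{lcklm}) is integral exactly when $\mu\in\mathbb{Z}$ and
\be
x_0+\mathrm{i}x_1=\frac{(k+\mathrm{i}l)(1+\mathrm{i}\mu)}{1+\mathrm{i}m}\in\mathbb{Z}[\mathrm{i}],
\ee
that is, when $(1+\mathrm{i}m)\mid(k+\mathrm{i}l)(1+\mathrm{i}\mu)$ in the Gaussian integers. Since $(1+\mathrm{i}m)(1-\mathrm{i}m)=1+m^2$, the choice $\mu=m+n(1+m^2)$ with $n\in\mathbb{Z}$ gives $1+\mathrm{i}\mu=(1+\mathrm{i}m)(1+nm+\mathrm{i}n)$, hence $x_0+\mathrm{i}x_1=(k+\mathrm{i}l)(1+nm+\mathrm{i}n)$ is manifestly a Gaussian integer. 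Writing out real and imaginary parts yields the one-parameter family
\be
(x_0,x_1,x_2)=\bigl(k(1+nm)-ln,\ l(1+nm)+kn,\ m+n(1+m^2)\bigr),\qquad n\in\mathbb{Z},
\ee
whose members are pairwise distinct ($x_2$ is strictly monotone in $n$), reduce to $(k,l,m)$ at $n=0$, and satisfy $x_0^2+x_1^2-x_2^2=1$ by the elementary identity $\bigl((1+nm)^2+n^2\bigr)(1+m^2)=\bigl(m+n(1+m^2)\bigr)^2+1$. The same computation goes through for AdS$_2^M[\mathbb{Z}]$ upon replacing $1+m^2$ by $M^2+m^2$ throughout.

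The only delicate step is the sharp description of the admissible $\mu$, which is where the arithmetic enters. The congruence $\mu\equiv m\pmod{1+m^2}$ used above is sufficient but in general not necessary; the true condition $(1+\mathrm{i}m)\mid(k+\mathrm{i}l)(1+\mathrm{i}\mu)$ depends on the factorization of $1+\mathrm{i}m$ in $\mathbb{Z}[\mathrm{i}]$ and on how its prime factors are distributed between $k+\mathrm{i}l$ and $k-\mathrm{i}l$ (note that $(1+\mathrm{i}m)(1-\mathrm{i}m)=(k+\mathrm{i}l)(k-\mathrm{i}l)$ is forced by $k^2+l^2=1+m^2$). The precise answer is the progression $m+\frac{1+m^2}{g}\,\mathbb{Z}$, where $g=\gcd(mk-l,\,k+lm,\,1+m^2)$ is the content of the primitive direction vector of the line; $g$ is in turn controlled by the splitting of $1+m^2$ over the primes $p\equiv1\pmod 4$, i.e.\ by exactly the data whose irregularity captures the randomness of the integral points on AdS$_2$ noted earlier. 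For the statement at hand this refinement is inessential: the sufficient congruence already exhibits infinitely many integral points, which is all that is claimed.
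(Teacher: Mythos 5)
Your proposal is correct, and its second half is essentially the paper's own argument: the same Gaussian-integer divisibility condition $x_0+\mathrm{i}x_1=\tfrac{k+\mathrm{i}l}{1+\mathrm{i}m}(1+\mathrm{i}\mu)\in\mathbb{Z}[\mathrm{i}]$, leading to the same explicit one-parameter family $\bigl(k+n(km-l),\,l+n(k+lm),\,m+n(1+m^2)\bigr)$ (the paper's eq.~(\ref{pointonAdS2}) with $b=n$). Two points where you go beyond the paper are worth noting. First, your opening argument --- a line through a point of $\mathbb{Z}^3$ with rational direction vector contains an arithmetic progression of lattice points, and the ruling property keeps them on the hyperboloid --- is a more elementary route that already settles the proposition without any Gaussian-integer arithmetic. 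Second, your remark that the congruence $\mu\equiv m\pmod{1+m^2}$ is sufficient but not necessary is a genuine (if inessential) correction: the paper asserts that integrality holds \emph{iff} $(1+\mathrm{i}\mu)/(1+\mathrm{i}m)$ is itself a Gaussian integer, which overlooks the case where $k+\mathrm{i}l$ shares Gaussian prime factors with $1+\mathrm{i}m$ (e.g.\ $(k,l,m)=(1,2,2)$, where every integer $\mu$ gives an integral point, not only $\mu\equiv 2\bmod 5$). Neither difference affects the truth of the proposition, and your proof is complete as written.
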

\begin{proof}
We write 
\be
\label{intpt1}
x_0(\mu)+\mathrm{i}x_1(\mu)=e^{\mathrm{i}\phi}(1+\mathrm{i}\mu)
\ee
where $\phi$ is defined by eq.~(\ref{rationalpointscircle}).

We look for integer values of $\mu=n\in\mathbb{Z},$ such  that $x_0(n)$ and $x_1(n)$ are, also, integers. 

That is  
\be
\label{gaussianInt}
x_0(n)+\mathrm{i}x_1(n)=\frac{k+\mathrm{i}l}{1+\mathrm{i}m}(1+\mathrm{i}n)
\ee
should be a Gaussian integer and this can hapen iff  $(1+\mathrm{i}n)/(1+\mathrm{i}m)=a+\mathrm{i}b$ with $a,b\in\mathbb{Z}.$ 

Therefore
\be
\label{gaussianint1}
1+\mathrm{i}n=(a-mb)+\mathrm{i}(am+b)\Leftrightarrow\left\{\begin{array}{l} 1 = a-mb\\ n = am+b\end{array}\right.
\ee

Thus on the light cone line passing through the point $(k,l,m)$ there are infinite integer points parametrized as:
\be
\label{pointonAdS2}
\begin{array}{l}
\displaystyle
x_0 = k + b(km-l)\\
\displaystyle
x_1 = l + b(k+lm)\\
\displaystyle
x_2 = n = m+ b(1+m^2)
\end{array}
\ee
\end{proof} 
\begin{prop}
Conversely, on  any light cone line emanating from any rational point of the circle on the throat of the hyperboloid there is an infinite number of integer points. 
\end{prop}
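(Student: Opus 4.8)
The plan is to run the argument of the preceding proposition in reverse, trading Gaussian integers for the congruence arithmetic that underlies them. A rational point on the throat circle can be written as $e^{\mathrm{i}\phi}=(p+\mathrm{i}q)/r$ with $p,q,r\in\mathbb{Z}$, $r>0$ and $p^{2}+q^{2}=r^{2}$; after cancelling common factors I may assume $\gcd(p,q,r)=1$, which for a Pythagorean triple forces $p,q,r$ to be pairwise coprime. The light-cone line emanating from $(\cos\phi,\sin\phi,0)$ is the ruling of eqs.~(\ref{lcklm})--(\ref{intpt1}), i.e.\ $x_0(\mu)+\mathrm{i}x_1(\mu)=e^{\mathrm{i}\phi}(1+\mathrm{i}\mu)$ with $x_2=\mu$ (the opposite ruling, $e^{\mathrm{i}\phi}(1-\mathrm{i}\mu)$, is handled identically). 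I then search for $\mu=n\in\mathbb{Z}$ for which $x_0(n)$ and $x_1(n)$ are integers as well; since $(p+\mathrm{i}q)(1+\mathrm{i}n)=(p-qn)+\mathrm{i}(q+pn)$, this is precisely the pair of divisibility conditions $r\mid(p-qn)$ and $r\mid(q+pn)$.

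The next step is to solve these conditions. Because $\gcd(q,r)=1$, the first one, $r\mid(p-qn)$, is solvable and its solutions form a single residue class $n=n_{0}+t r$, $t\in\mathbb{Z}$. For each such $n$ the second condition is automatic: from $r\mid(p-qn)$ one gets $r\mid p(p-qn)=(r^{2}-q^{2})-pqn$, hence $r\mid q(q+pn)$, and cancelling the factor $q$ (again coprime to $r$) yields $r\mid(q+pn)$. Thus the admissible values of $n$ are exactly the integers of one arithmetic progression of common difference $r$, of which there are infinitely many, and for each of them eq.~(\ref{lcklm}) produces the integer triple $(x_0,x_1,x_2)=\bigl((p-qn)/r,\,(q+pn)/r,\,n\bigr)$.

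It then remains to check that these triples lie on the hyperboloid, which is immediate from the complex packaging: $x_0+\mathrm{i}x_1=e^{\mathrm{i}\phi}(1+\mathrm{i}n)$ gives $x_0^{2}+x_1^{2}=|1+\mathrm{i}n|^{2}=1+n^{2}=1+x_2^{2}$, i.e.\ $x_0^{2}+x_1^{2}-x_2^{2}=1$, in agreement with eq.~(\ref{AdS2latt}) under the normalisation of eq.~(\ref{coords1}). Hence the light-cone line through an arbitrary rational point of the throat circle carries infinitely many integral points, which is the claim. The same reasoning, restated inside $\mathbb{Z}[\mathrm{i}]$, says that for a primitive triple $r$ is coprime (up to units) to $p-\mathrm{i}q$, so the factorisation of $r$ into Gaussian primes is absorbed by that of $p+\mathrm{i}q$, leaving $(1+\mathrm{i}n)$ free to range over an arithmetic progression.

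I expect the only genuinely delicate point to be the reduction of the rational point to lowest terms and the coprimality $\gcd(q,r)=1$ that it secures: for an unreduced representation the congruence $r\mid(p-qn)$ may have no solution at all, so normalising first is essential. Once that is in place, the rest --- collapsing the two divisibility conditions to one and verifying membership on the hyperboloid --- is a short computation resting entirely on $p^{2}+q^{2}=r^{2}$.
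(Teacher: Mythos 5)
Your argument is correct, and it reaches the conclusion by a genuinely different (though closely related) reduction than the paper's. The paper parametrizes the rational throat point as $e^{\mathrm{i}\phi}=(a+\mathrm{i}b)/(a-\mathrm{i}b)$ with $a,b$ coprime integers, demands that $(1+\mathrm{i}n)/(a-\mathrm{i}b)$ be a Gaussian integer $d+\mathrm{i}c$, and is thereby led to the Bezout equation $ad-bc=1$; the key observation there is that this Diophantine equation has the one-parameter family of solutions $(c+\kappa a,\,d+\kappa b)$, which yields the infinite family of integral points shifted along the light-like vector $(2ab,\,a^2-b^2,\,a^2+b^2)$. You instead write the point as $(p+\mathrm{i}q)/r$ with $p^2+q^2=r^2$ primitive and reduce integrality to the single linear congruence $qn\equiv p\ (\mathrm{mod}\ r)$, whose solutions form an arithmetic progression of common difference $r$; the price is the extra verification that the second divisibility condition $r\mid(q+pn)$ follows from the first via $p^2+q^2=r^2$, a step the paper's Gaussian-integer framing absorbs automatically. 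The two answers coincide: under $p+\mathrm{i}q=(a+\mathrm{i}b)^2$, $r=a^2+b^2$, the paper's $\kappa$-family is exactly your progression $n=n_0+tr$. What your route buys is elementary congruence arithmetic and an explicit identification of \emph{which} integers $n$ work (one residue class mod $r$), plus the useful cautionary remark that the representation must be reduced for the congruence to be solvable; what the paper's route buys is the geometric interpretation of the solution family as a translation along a null direction, making it manifest that the integral points lie on a light-like line. You also cover both rulings through the point explicitly, which the paper leaves implicit.
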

\begin{proof}
Indeed,we have
\be
\label{gaussianint2}
e^{\mathrm{i}\phi}\equiv\frac{a+\mathrm{i}b}{a-\mathrm{i}b}\Leftrightarrow x_0+\mathrm{i}x_1=\frac{a+\mathrm{i}b}{a-\mathrm{i}b}(1+\mathrm{i}n)
\ee
with $a,b\in\mathbb{Z}.$ In order to to obtain an integral point, for $\mu=n,$ we must have 
\be
\label{gaussianint3}
\frac{1+\mathrm{i}n}{a-\mathrm{i}b}=d+\mathrm{i}c
\ee
with $c,d\in\mathbb{Z}$ 

We immediately deduce that 
\be
\label{gaussianint4}
\begin{array}{l}
\displaystyle
1 = ad-bc\\
\displaystyle
n = ac + bd\\
\end{array}
\ee
These expressions imply that, given the integers $a$ and $b,$ it's possible to find the integers $c$ and $d$ and to express the coordinates $x_0, x_1$ and $x_2$ as
\be
\label{gaussianint5}
\begin{array}{l}
\displaystyle 
x_0 = ad+bc\\
\displaystyle
x_1 = ac-bd\\
\displaystyle
x_2 = ac+bd
\end{array}
\ee
The Diophantine equation $1 = ad-bc$ is solved for $c$ and $d,$ given two coprime integers $a$ and $b,$ by the Euclidian algorithm--which seems to  lead to a unique solution, implying that the point $(x_0, x_1, x_2)$ is unique. 

However there's a subtlety! There are {\em infinitely many} solutions $(c,d),$ to the equation $ad-bc=1$! The reason is that, given any one solution $(c,d),$ the pair 
$(c+\kappa a, d+\kappa b),$ with $\kappa\in\mathbb{Z},$ is, also, a solution, as it can be checked by substitution.

Therefore there is a one--parameter family of points, labeled by the integer $\kappa$:
\be
\label{integerpointAdS2}
\begin{array}{l}
\displaystyle
x_0 = ad+bc + 2\kappa a b\\
\displaystyle
x_1 = ac - bd + \kappa (a^2-b^2)\\
\displaystyle
x_2 = ac+bd + \kappa (a^2+b^2)
\end{array}
\ee
We remark, however,  that the vector $(2ab,a^2-b^2,a^2+b^2)$ is light--like, with respect to the $(++-)$ metric: $(2ab)^2+(a^2-b^2)^2-(a^2+b^2)^2=0.$ So eq.~(\ref{integerpointAdS2}) 
describes a shift of the  point $(ad+bc, ac-bd, ac+bd),$ along a light--like direction. Since the shift is linear in the ``affine parameter'', $\kappa,$ it generates a light--like line, passing through the original point.  

In this way we have established the dictionary between the rational points of the circle and the integral points of the hyperboloid. 

\end{proof}

Now we proceed with the study of the discrete symmetries of  the integral Lorentzian  lattice of  $\mathscr{M}^{2,1},$ where the lattice of integral points on AdS$_2$ is embedded.
The lattice of integral points of $\mathscr{M}^{2,1},$ with one space-like and two time-like dimensions, carries as isometry group the group of integral Lorentz boosts SO$(2,1,\mathbb{Z}),$ as well as integral Poincaré translations. The double cover of this infinite and discrete group is SL$(2,\mathbb{Z}),$ the modular group. This has been shown by Schild~\cite{Schild49,Schild48} in the 1940s.  The group SO$(2,1,\mathbb{Z})$ can be generated by reflections, as has been shown by Coxeter~\cite{Coxeter},  Vinberg~\cite{Vinberg_1967}.  This work culminates in the famous book by Kac~\cite{Kac1990}, where he   introduced the notion of hyperbolic, infinite dimensional, Lie algebras. The characteristic property of  such algebras is that the discrete Weyl group of their root space is an integral Lorentz group. Generalization from $SL(2,\mathbb{Z})$ to other normed algebras has been studied in~\cite{Feingold:2008ih}. 

The fundamental domain of SO$(2,1,\mathbb{Z})$  is the minimum
 set of points  of the integral lattice of $\mathscr{M}^{2,1},$which are
not  related by any element of the group  and from which,
  all the other points of the lattice can be generated by repeated action
of  the elements of the group. It turns out that the fundamental region
is an infinite set of points which  can be generated by repeated action of  reflections in the following way:

Using the metric $h\equiv\mathrm{diag}(1,1,-1)$ on $\mathscr{M}^{2,1}$  the generating reflections, elements of SO$(2,1,\mathbb{Z}),$ are given by the matrices
\begin{equation}
\label{genrefl}
\begin{array}{lclcl}
R_1=\left(\begin{array}{ccc} 
-1 & & \\ 
    & 1 & \\
    &   & 1
\end{array}\right),  & & 
R_2 = \left(\begin{array}{ccc} 
1 & & \\
   & 1 & \\
   &   & -1
\end{array}\right),  & & 
R_3 = \left(\begin{array}{ccc} 
0 & 1 & \\
1 & 0 & \\
   &   & 1
\end{array}\right) \\
& & 
R_4 =   \left(\begin{array}{ccc} 
1 & -2 & -2\\
2 & -1 & -2\\
-2    &  2 & 3
\end{array}\right)
\end{array}
\end{equation} 
If $(k,l,m)$ are the coordinates of the integral lattice, the fundamental domain of SO$(2,1,\mathbb{Z})$ can be defined by the conditions $m\geq k+l\geq 0$ and $k\geq l\geq 0.$ This fundamental domain, restricted on AdS$_2^M[\mathbb{Z}],$ defines the corresponding fundamental domain of SO$(2,1,\mathbb{Z}),$ acting on AdS$_2^M[\mathbb{Z}].$ 
This region of AdS$_2[\mathbb{Z}]$ lies in the positive octant of $\mathscr{M}^{2,1}$ and between the two planes, that define the conditions--cf.fig.~\ref{FundDomainAdS2ZN}. It is of {\em infinite} extent.
\begin{figure}[thp]
\begin{center}
\includegraphics[scale=0.6]{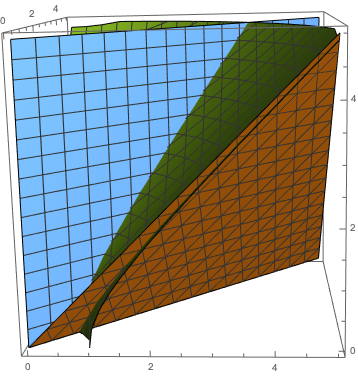}
\end{center}
\caption[]{The fundamental domain of SO$(2,1,\mathbb{Z})$ on AdS$_2^M[\mathbb{Z}]$ is the dark green  part of the hyperoboloid, in the positive octant, that lies between the two planes, 
$m\geq k+l\geq 0$ and $k\geq l\geq 0.$}
\label{FundDomainAdS2ZN}
\end{figure}

\subsection{The IR cutoff and the  toroidal compactification of AdS$_2$}\label{UVIR}
Having introduced the lattice of integral points on AdS$_2,$ which we consider as defining an UV cutoff, we proceed, now, to impose an infrared (IR) cutoff. The crucial reason for such a cutoff is that in order to study chaotic   Hamiltonian  dynamics on this spacetime.~\cite{Axenides:2016nmf}, we have made use of the interpretation of AdS$_2$ as a phase space of  single particles, due to the symplectic nature of the isometry $SL(2,\mathbb{R})=Sp(2,\mathbb{R})$. The additional requirement of  mixing (scrambling) imposes the condition of the compactness of the phase space and therefore the necessity of imposing  of an infrared cutoff (for a detailed discussion of this point cf.~\cite{ArnoldAvez}). 

Having  embedded the AdS$_2$ hyperboloid, 
\begin{equation}
\label{AdS2_M21}
x_0^2 + x_1^2 - x_2^2 = R_{\mathrm{AdS}_2}^2
\end{equation}
in $\mathscr{M}^{2,1},$ the IR  cutoff, $L$  is defined by periodically identifying all the spacetime points of $\mathscr{M}^{2,1},$ if the difference of their coordinates is an integral vector$\times L$: 
\begin{equation}
\label{equivclassIR}
x\sim y\Leftrightarrow x-y = (k,l,m)L
\end{equation}
where $k,l,m\in\mathbb{Z}.$ 
In this way we have compactified $\mathscr{M}^{2,1}$ to the three-dimensional torus, of size $L,$ $\mathbb{T}^3(L).$ 

More concretely, $\mathbb{T}^3(L)$ is the fundamental domain of the group of integral translations, $\mathbb{Z}\times\mathbb{Z}\times\mathbb{Z},$ acting on $\mathscr{M}^{2,1}.$ To describe this geometric property by the algebraic operation, mod $L,$ that acts on the coordinates of $\mathscr{M}^{2,1},$ we are led to identify the fundamental domain with the positive octant of $\mathscr{M}^{2,1},$ i.e. $x_0,x_1,x_2\geq 0$

After this compactification, the spacetime geometry of AdS$_2$ becomes a foliation of the 3-torus, with leaves the images of AdS$_2$ under the operation mod $L.$ So the equation, whose solutions define the points of the compactified AdS$_2,$ is 
\begin{equation}
\label{AdS2comp}
x_0^2+x_1^2-x_2^2\equiv\,R_\mathrm{AdS_2}^2\,\mathrm{mod}\,L
\end{equation}
where $(x_0,x_1,x_2)\in\mathbb{T}^3(L).$

It is obvious, that inside the 3-torus, there is a part of the AdS$_2$ surface, which corresponds to solutions of eq.~(\ref{AdS2comp}), without the mod $L$ operation. On the other hand, the infinite part of AdS$_2,$ that lies outside the torus, is partitioned in infinitely many pieces, which belong to images of $\mathbb{T}^3(L)$ in $\mathscr{M}^{2,1}.$ 
 These pieces  are brought inside the  torus by the mod $L$ operation.

Now we choose the IR cutoff $L$ in units of $a,$ so that $L=aN,$ where $N$ is an integer,  independent of $M.$ It is constrained by $N>M,$ since the cube should contain, at least, the throat of AdS$_2.$ 

So the scaling limit entails  taking  $M\to\infty,$ $N\to\infty,$ but keeping $L$ fixed. 

The periodic nature of the IR cutoff implies that we must take the images of all integral points of AdS$_2[\mathbb{Z}]$ under the mod $N$ operation, inside the cubic lattice of $N^3$ points. 

The set of these images satisfy the equations
\begin{equation}
\label{AdS2ZN}
k^2+l^2-m^2\equiv M^2\,\mathrm{mod}\,N
\end{equation} 
The set of points satisfying this condition will be called AdS$_2^M[\mathbb{Z}_N]$. 

 Our definition for AdS$_2[\mathbb{Z}_N]$ in our previous work  was similar to the one given here. The only difference being that the RHS of eq.~(\ref{AdS2ZN}) was 
$1\,\mathrm{mod}\,N,$ which was chosen for convenience, rather than for any intrinsic reason. We remark that the two definitions are consistent iff
$M^2\equiv\,1\,\mathrm{mod}\,N.$ 

The solutions of eq.~(\ref{AdS2ZN}), when $M^2\equiv 1\,\mathrm{mod}\,N,$ produce the  AdS$_2[\mathbb{Z}_N]$ geometry introduced in our previous work. 

\section{Continuum limit for large $N$}\label{contlim}
\subsection{Constraints on the double sequences of the  UV/IR cutoffs}\label{M2N}

Having constructed the finite geometry, AdS$_2^M[\mathbb{Z}_N]$ and established its relation with AdS$_2[\mathbb{Z}_N]$, we shall discuss the meaning of the limit, $M,N\to\infty$. It is in this limit that we hope to recover the continuum AdS$_2$ geometry. 

Such a  limit  can be defined using the topology of the  ambient  Minkowski spacetime $\mathscr{M}^{2,1}$.

Specifically, we use a  reverse, two--step, process: Firstly, by removing the UV cutoff; next, by removing  the IR cutoff. This is realized  by choosing any sequence of pairs of integers, $(M_n,N_n),$ $n=1,2,3,\ldots,$ such that, for any $n=1,2,3,\ldots$ 
\begin{itemize}
\item
 $N_n>M_n,$
\item  $M_n^2\equiv\,1\,\mathrm{mod}\,N_n,$ 
\item The limit of the ratio $N_n/M_n$ takes a finite value, $>1$ (as $n\to\infty$), which we can  identify with $L/R_{\mathrm{AdS}_2}.$  
\end{itemize}
Below we shall present the general solution to the equation $M^2\equiv\,1\,\mathrm{mod}\,N.$ Subsequently, we shall select those solutions that satisfy the other requirements. 

The first step is to factor $N$ into (powers of) primes,  $N=N_1\times N_2\times\cdots\times N_l=q_1^{k_1}q_2^{k_2}\cdots q_l^{k_l}.$ Then the equation $M^2\equiv\,1\,\mathrm{mod}\,N,$ is equivalent to the system
\begin{equation}
\label{M21N}
M_I^2\equiv\,1\,\mathrm{mod}\,q_I^{k_I}
\end{equation} 
where $I=1,2,\ldots,l.$ The Chinese Remainder Theorem~\cite{bressoudwagon}  then implies that all the solutions of eq.~(\ref{M21N}) can be used to construct $M,$ with $M=M_1m_1n_1 + \cdots M_lm_ln_l,$ where 
$M_I\equiv\,M\,\mathrm{mod}\,N_I,$ $m_I=N/N_I,$ $n_I\, \equiv m_I^{-1}\,\mathrm{mod}\,N_I.$

When $q_I\neq 2,$ the solutions are  $M_I=1$ and $q_I^{n_I}-1.$ When $q_I = 2,$ there exist four solutions, $M_I=1,2^{n_I}-1,2^{n_I-1}\pm 1.$ 

Now we must choose sequences, $N_n$ and determine the corresponding $M_n,$ satisfying the constraints listed above. 

In the next  two subsections  we shall present nontrivial examples of sequences of pairs, $(M_n, N_n)$ satisfying the above constraints, whose limiting ratio, $\lim_{n\to\infty} N_n/M_n,$ is the ``golden'' or ``silver'' ratios. 
The general question of determining sequences which have an arbitrary, but given, limiting ratio, is an interesting question, which is deferred to a future work.

\subsection{Removing the UV cutoff by the Fibonacci sequence}\label{fibon}
Although it is easy to demonstrate the existence of such sequences--for example, $N_n=2^n$ and $M_n=2^{n-1}\pm 1,$ where $M_n^2\equiv\,1\,\mathrm{mod}\,N_n$ and $N_n/M_n\to 2,$  which implies that $L/R_{\mathrm{AdS}_2}=2$, in this section we focus on another particular class of sequences, based on the Fibonacci integers, $f_n$~\cite{bressoudwagon}.  This case is of particular interest, since, in our previous paper~\cite{Axenides:2016nmf}, where we studied fast scrambling, we found that, for geodesic observers, moving in AdS$_2[N],$ with evolution operator the Arnol'd cat map, the fast scrambling bound is saturated, when $N$ is a Fibonacci integer. 

The Fibonacci sequence, defined by 
\begin{equation}
\label{fibinacci_seq}
\begin{array}{l}
f_0=0; f_1=1\\
f_{n+1}=f_n+f_{n-1}\\
\end{array}
\end{equation}
can be written  in matrix form
\begin{equation}
\label{matrixfib}
\left(\begin{array}{c} f_n\\f_{n+1}\end{array}\right)=\underbrace{\left(\begin{array}{cc} 0 & 1 \\ 1 & 1\end{array}\right)}_{\sf A}\left(\begin{array}{c} f_{n-1} \\ f_n\end{array}\right)
\end{equation}
We remark that the famous Arnol'd cat map can be written as 
\be
\label{ArnoldCM}
\left(\begin{array}{cc} 1 & 1 \\ 1 & 2\end{array}\right) = {\sf A}^2
\ee
Since the matrix ${\sf A}$ doesn't depend on $n,$ we can solve the recursion relation in closed form, by setting $f_n\equiv C \rho^n$ and find the equation, satisfied by $\rho$
$$
\rho^{n+1}=\rho^n+\rho^{n-1}\Leftrightarrow \rho^2-\rho-1=0\Leftrightarrow \rho\equiv \rho_\pm=\frac{1\pm\sqrt{5}}{2}
$$
Therefore, we may express $f_n$ as a linear combination of $\rho_+^n$ and $\rho_-^n=(-)^n\rho_+^{-n}$:
\be
\label{solfib}
f_n=A_+\rho_+^n+A_-\rho_-^n\Leftrightarrow\left\{\begin{array}{l} f_0=A_+ + A_- = 0\\ f_1 = A_+\rho_+ + A_-\rho_-=1\end{array}\right.
\ee
whence we find that 
$$
A_+ = -A_-=\frac{1}{\rho_+ -\rho_-}=\frac{1}{\sqrt{5}}
$$
therefore,
\be
\label{solfib1}
f_n=\frac{\rho_+^n-(-)^n\rho_+^{-n}}{\sqrt{5}}
\ee
It's quite fascinating that the LHS of this expression is an integer!

The eigenvalue $\rho_+>1$ is known as the ``golden ratio'' (often denoted by $\phi$ in the literature) and it's straightforward  to show that $f_{n+1}/f_n\to\rho_+,$ as $n\to\infty.$

Furthermore, it can be shown, by induction,  that the elements of ${\sf A}^n$ are, in fact, the Fibonacci numbers themselves, arranged as follows:
 \begin{equation}
 \label{fibonacciA}
 {\sf A}^n = \left(\begin{array}{cc} f_{n-1} & f_n\\ f_n &  f_{n+1}\end{array}\right)
 \end{equation}
 One reason this expression is useful is that it implies that $\mathrm{det}\,{\sf A}^n = (-)^n=f_{n-1}f_{n+1}-f_n^2.$ 
 
 For $n=2l+1,$ we remark that this relation takes the form  $f_{2l+1}^2=1+f_{2l}f_{2l+2}.$ 
 
 Now, since $f_{2l+1}$ and $f_{2l+2}$ are successive iterates, they're coprime, which  implies, that $f_{2l+1}^2\equiv\,1\,\mathrm{mod}\,f_{2l+2}.$
 
Therefore, the sequence of pairs,
$(M_l=f_{2l+1},N_l=f_{2l+2}),$ where $l=1,2,3,\ldots,$ satisfy all of the requirements and the
corresponding limiting ratio, $L/R_\mathrm{AdS_2},$ can be found analytically. It is, indeed, equal to $\rho_+=(1+\sqrt{5})/2,$ the golden ratio. 

 In the next subsection we shall consider the so-called $k-$Fibonacci sequences, which will be important for obtaining other values for the ratio $L/R_{\mathrm{AdS}_2}$, as well as for  removing the IR cutoff. 

\subsection{Removing the IR cutoff using the generalized $k-$Fibonacci sequences}\label{contfrac}
It's possible to generalize the Fibonacci sequence in the  following way:
\begin{equation}
\label{kfibrec}
g_{n+1}=kg_n+g_{n-1}
\end{equation}
with $g_0=0$ and $g_1=1$ and $k$ an integer.  This is known as the ``$k-$Fibonacci'' sequence~\cite{Horadam}. 

We may solve for $g_n\equiv C\rho^n$; the characteristic equation for $\rho,$ now, reads
\be
\label{kfibseq}
\rho^2-k\rho-1=0\Leftrightarrow\rho_\pm(k)=\frac{k\pm\sqrt{k^2+4}}{2}
\ee
and express $g_n$ as a linear combination of the $\rho_\pm$:
\be 
\label{kfibsol}
g_n = A_+\rho_+(k)^n + A_-\rho_-(k)^n=\frac{\rho_+(k)^n-(-)^n\rho_+(k)^{-n}}{\sqrt{k^2+4}}
\ee
that generalizes eq.~(\ref{solfib1}).

In matrix form
\be
\label{kfibinacci}
\left(\begin{array}{c} g_n\\ g_{n+1}\end{array}\right)=\underbrace{\left(\begin{array}{cc} 0 & 1\\ 1 & k\end{array}\right)}_{{\sf A}(k)}\left(\begin{array}{c} g_{n-1}\\g_n\end{array}\right)
\ee
Similarly as for the usual Fibonacci sequence, we may show, by induction, that 
\be
\label{Akn}
{\sf A}(k)^n=\left(\begin{array}{cc} g_{n-1} & g_n \\ g_n & g_{n+1}\end{array}\right)
\ee
We find that $\mathrm{det}\,{\sf A}(k)^n=(-)^n,$ therefore that $g_{2l+1}^2\equiv\,1\,\mathrm{mod}\,g_{2l+2}$; thus, $g_{2l+2}/g_{2l+1}\to L/R_\mathrm{AdS_2}=\rho_+(k),$ where the eigenvalue of ${\sf A}(k),$ $\rho_+(k),$ that's greater than 1, of course, depends on $k.$ In this way it is possible to obtain infinitely many values of the ratio $L/R_{\mathrm{AdS}_2}$. 
Furthermore,  we have determined $L,$ the IR cutoff,  in terms of $R_\mathrm{AdS_2}.$

What is remarkable is that, using the 
additional parameter, $k,$ of the $k-$Fibonacci sequence, it is, now,  possible to remove the IR cutoff, as well, since it is possible to send $L\to\infty,$ as $k\to\infty,$ keeping $R_\mathrm{AdS_2}$ fixed.  

While  $k$ remains finite, the periodic box cannot be removed and, in the continuum limit, $a\to 0,$ we obtain infinitely many foldings of the AdS$_2$ surface inside the box due to the mod $L$ operation. 

The Fibonacci sequence, taken mod $N,$ is periodic, with period $T(N)$; this turns out to be a ``random'' function of $N.$ The ``shortest'' periods, as has been shown by Falk and Dyson~\cite{falk_dyson}, occur when $N=F_l,$ for any $l.$ In that case, $T(F_l)=2l.$ 

We may, thus, ask the same question for the $k-$Fibonacci sequence, where the ratio of its successive elements, $g_{n+1}/g_n$ tend to the so-called ``$k-$silver ratio'', 
\be
\label{silverratio}
\rho_+(k)=\frac{k+\sqrt{k^2+4}}{2}
\ee 
(the ``silver ratio'' is $\rho_+(k=2)$)

From eq.~(\ref{Akn}), taking mod $g_l$ on both sides, we find that, when $n=l,$ the matrix becomes $\pm$(the identity matrix), so $T(g_l)=l$ or $2l,$ respectively; thereby generalizing the Falk--Dyson result for the $k-$Fibonacci sequences. 
  
\section{Conclusions}\label{concl}
The logical approach for discussing the relation between classical and quantum physics involves showing how the former can be obtained as a limit of the latter, since it is the quantum description that is more ``fundamental'' and it isn't possible to describe quantum effects in terms of classical physics. In this contribution we have, therefore, shown how the smooth AdS$_2$ geometry, that is a hallmark of the near horizon geometry of extremal black holes, can, indeed, be obtained by a limiting process from a finite and discrete geometry, that has been shown to capture the consistent description of the single-particle probes of the near horizon geometry, that can resolve the individual black hole microstates. 

This approach can be readily generalized to higher dimensional AdS$_k$ spacetimes, $k>2.$ 

The next step involves describing the near horizon degrees themselves, as a many-body system. 
 
\newpage
\bibliographystyle{JHEP}
\bibliography{ads2discrete}
\end{document}